\documentclass[11pt]{article}
\usepackage[T1]{fontenc}
\usepackage[utf8]{inputenc}
\usepackage[english]{babel}
\usepackage{lmodern}
\usepackage{microtype}
\usepackage{amsmath,amssymb,mathtools}
\usepackage{booktabs,array,tabularx}
\usepackage{graphicx}
\usepackage{float}
\usepackage{geometry}
\usepackage{enumitem}
\usepackage{needspace}
\setlist{leftmargin=*}
\usepackage{tikz}
\usetikzlibrary{positioning,fit,calc,arrows.meta,decorations.pathreplacing}
\usepackage{caption}
\usepackage{titlesec}
\titleformat{\section}{\bfseries\normalsize}{\thesection}{0.5em}{}
\titleformat{\subsection}{\bfseries\small}{\thesubsection}{0.5em}{}
\titlespacing*{\section}{0pt}{9pt}{4pt}
\titlespacing*{\subsection}{0pt}{7pt}{3pt}
\usepackage[hidelinks]{hyperref}
\hypersetup{
  pdftitle={The Interconnectedness Coefficient: A Semi-Local Graph-Theoretic Measure for Connector Vertices between Cohesive Network Regions},
  pdfauthor={Thomas Wiebringhaus},
  pdfsubject={Interconnectedness Coefficient (IC), semi-local graph-theoretic node measure for connector and bridging vertices between cohesive network regions, computation from triangle counts and common neighbors},
  pdfkeywords={Interconnectedness Coefficient, IC, semi-local centrality, local centrality, graph-theoretic node measure, connector vertex, bridging node, bridge node detection, structural bottleneck, cohesive network regions, local clustering coefficient, triangle-based graph measure, network centrality, vertex ranking, community-independent centrality, partition-free node measure, reference implementation, protein-protein interaction network, interface protein, scaffold protein, adaptor protein, adapter protein}
}
\usepackage{url}
\newcommand{\IC}{\mathrm{IC}}

\newtheorem{proposition}{Proposition}

\title{The Interconnectedness Coefficient:\\
A Semi-Local Graph-Theoretic Measure for Connector Vertices between Cohesive Network Regions}
\author{Thomas Wiebringhaus\\
\small ifes Institute for Empirical Research \& Statistics\\
\small FOM University of Applied Sciences, M\"unster, Germany\\
\small \href{mailto:thomas.wiebringhaus@fom.de}{thomas.wiebringhaus@fom.de}}
\date{12 September 2026}

\begin{document}
\maketitle

\begin{abstract}
The Interconnectedness Coefficient (IC) is a bounded semi-local graph-theoretic node measure designed to identify connector vertices between cohesive network regions. Such connector vertices, also referred to as bridging nodes, may mediate between locally cohesive regions even when they are neither hubs nor themselves highly clustered. The IC preferentially assigns high values to weakly clustered focal vertices whose adjacent vertices remain strongly clustered after exclusion of the focal connection. Candidate vertices are required to have degree at least two. The construction is partition-free, uses information within radius two, and requires no predefined community or module partition. The range and extremal properties of the score are derived analytically. Exact graph families isolate its maximal response to fully cohesive branches, its controlled response to a single cohesion defect, its invariance under a cohesion-free hub extension, and a sharp fragmentation threshold. A separate application to a Human Interactome Map reveals pronounced degree-dependent stabilization of IC values near the network's mean clustering level. This behavior follows directly from the multiplicative definition. If focal clustering tends to zero while mean leave-one-out cohesion in the neighborhood stabilizes, the IC converges to that neighborhood-cohesion level. Among the highly ranked IC vertices are proteins with established interface, scaffold, and adaptor roles in molecular complexes. The IC is therefore positioned as a semi-local connector measure for cohesive network regions.
\end{abstract}

\textbf{Keywords:} Interconnectedness Coefficient (IC), semi-local centrality, local centrality, graph-theoretic node measure, connector vertex, bridging node, bridge node detection, structural bottleneck, cohesive network regions, local clustering coefficient, triangle-based graph measure, network centrality, vertex ranking, community-independent centrality, partition-free node measure, reference implementation, protein--protein interaction network, interface protein, scaffold protein, adaptor protein, adapter protein.

\section{Motivation and Contribution}
The Interconnectedness Coefficient (IC) serves to identify connector vertices between cohesive network regions from a bounded radius-two neighborhood. It therefore belongs to the broader domain of vertex ranking and centrality measures while targeting a specific structural role. Centrality does not refer to a single structural property. Different node scores encode shortest-path mediation, diffusion potential, local cohesion, $k$-core position, or access to otherwise separated regions \cite{Freeman1978,Lu2016}. The guiding question of the IC is:

\begin{quote}\itshape
Which vertices connect cohesive local regions while remaining only weakly clustered themselves?
\end{quote}

The target object is a connector vertex that links locally cohesive regions without requiring communities or modules to be specified in advance. For example, in a protein--protein interaction network (PPI), proteins are represented as vertices and reported interactions as edges. Functional complexes or modules may appear as cohesive regions. Proteins or small complexes located between such regions then constitute candidate mediators. The graph-theoretic target is therefore not an abstract notion of ``importance'' but a specific topology: a sparsely interconnected focal neighborhood with cohesive structure immediately beyond the neighbors of the focal vertex. Figure~\ref{fig:calibration} illustrates this semi-local signature.

\begin{figure}[H]
\centering
\begin{tikzpicture}[x=1cm,y=1cm,every node/.style={font=\scriptsize}]
\path[use as bounding box] (-3.4,-1.35) rectangle (3.4,1.35);
\node[circle,draw,inner sep=1.3pt] (v) at (0,0) {$v$};
\node at (0,.62) {\Large ?};
\node[circle,draw,inner sep=1.1pt] (a) at (-1.25,0) {};
\node[circle,draw,inner sep=1.1pt] (b) at (1.25,0) {};
\draw (a)--(v)--(b);
\foreach \x/\y in {-2.6/.48,-2.6/-.48,-1.9/.76,-1.9/-.76} \node[circle,draw,inner sep=1.0pt] at (\x,\y) {};
\draw (-2.6,.48)--(-2.6,-.48)--(-1.9,-.76)--(-1.9,.76)--cycle;
\draw (-2.6,.48)--(-1.9,-.76) (-2.6,-.48)--(-1.9,.76);
\draw (a)--(-2.6,.48) (a)--(-2.6,-.48) (a)--(-1.9,.76) (a)--(-1.9,-.76);
\foreach \x/\y in {2.6/.48,2.6/-.48,1.9/.76,1.9/-.76} \node[circle,draw,inner sep=1.0pt] at (\x,\y) {};
\draw (2.6,.48)--(2.6,-.48)--(1.9,-.76)--(1.9,.76)--cycle;
\draw (2.6,.48)--(1.9,-.76) (2.6,-.48)--(1.9,.76);
\draw (b)--(2.6,.48) (b)--(2.6,-.48) (b)--(1.9,.76) (b)--(1.9,-.76);
\node at (0,1.10) {Graph-theoretic calibration};
\node at (0,-1.10) {Connector candidate};
\end{tikzpicture}
\caption{Illustration of the structural role targeted by the IC. The focal vertex $v$ has a weakly interconnected immediate neighborhood, while cohesive structures lie beyond its adjacent gateway vertices. The IC tests this pattern without requiring a predefined community partition.}
\label{fig:calibration}
\end{figure}

The IC is semi-local because its information radius is two. It requires neither a module partition nor the prior condition that a candidate be an articulation vertex. A high IC value therefore does not imply articulation or guarantee that deleting the vertex disconnects the graph. Longer alternative paths may reconnect regions that appear separated within the radius-two neighborhood.

This bounded information requirement also distinguishes the IC computationally from global shortest-path measures such as betweenness centrality. Exact betweenness requires information about shortest paths across the graph, whereas the IC can be computed from neighborhoods, triangle counts, and common-neighbor intersections. The IC is not intended as an approximation to betweenness. It targets a different structural signature using semi-local information, which is attractive when large graphs make global path-based calculations expensive \cite{Freeman1978,Lu2016}.

The contribution has four components. First, a self-contained definition makes both the candidate set and the leave-one-out clustering quantity behind each adjacent branch explicit. Second, the range of the IC and its equality conditions are derived. Third, exact graph families isolate the response of the score to branch multiplicity, a single cohesion defect, a cohesion-free hub extension, and controlled fragmentation. Fourth, an application to a protein--protein interaction network provides an empirical layer with biologically interpretable findings and degree-dependent stabilization of the IC. The IC construction was initially developed in a bioinformatic network analysis \cite{Wiebringhaus2009} and subsequently presented as a distinct approach to intermodular connections \cite{WiebringhausBrinck2014}. The present work formalizes it as an independent graph-theoretic measure, calibrates its structural behavior analytically, and renders it algorithmically operational.

\section{Related Work and Research Questions}
Centrality does not refer to a single structural property. Freeman's classical clarification distinguishes several meanings of central position. Graph-theoretic and flow-based treatments likewise emphasize that a node score must be interpreted in relation to the structural role or process that it represents \cite{Freeman1978,Borgatti2005,BorgattiEverett2006}. Reviews span local, semi-local, and global objectives for vertex ranking \cite{Lu2016}. A recent encyclopedia catalogues more than 400 distinct node metrics, which further illustrates this diversity \cite{Shvydun2025}.

One major line of work identifies connectors relative to an explicit community structure. Functional cartography distinguishes within-module connectivity from participation across modules \cite{Guimera2005}. Fuzzy bridgeness extends this idea to overlapping community memberships \cite{Nepusz2008}. The Gateway Coefficient emphasizes rare or critical inter-module connections \cite{VargasWahl2014}, while Modular Centrality separates local and global influence after a community structure has been identified \cite{Ghalmane2019}. These approaches are natural when a partition or membership structure is part of the problem. The IC instead asks whether cohesive regions can be detected around a candidate directly from semi-local topology.

A related line of work targets bridging positions without relying exclusively on a community partition. Bridging Centrality combines a local bridging component with global betweenness information \cite{Hwang2008}. Localized Bridging Centrality replaces the global component with egocentric betweenness centrality and thereby shifts the calculation toward local information \cite{NandaKotz2008}. Bridging Node Centrality combines normalized route-betweenness with a distance-based bridgeness component \cite{Liu2019}. Comparative studies show that such bridging and participation-oriented measures can produce rankings that differ substantially from classical centralities \cite{Rajeh2021}.

Clustering information also appears in several local and semi-local ranking methods. Local Centrality with a Coefficient modifies a local centrality through a clustering-based factor \cite{Zhao2017}. Normalized Local Centrality incorporates local structural attributes together with clustering information from neighboring vertices \cite{Zhao2018}. Degree and Neighborhood Information Centrality combines degree with the local clustering coefficients of immediate neighbors \cite{Zhao2024}. These constructions show that both low focal clustering and neighborhood clustering have become recurring ingredients in node ranking. The IC differs in the specific relationship it requires between them: low clustering at the focal vertex must coincide with high leave-one-out cohesion behind its adjacent vertices.

Recent work sharpens this distinction. Dong et al. derive Local Dispersion Centrality from degree and local clustering and then combine it with global betweenness in a hybrid method for identifying structurally critical vertices \cite{Dong2026}. Meghanathan combines the complement of local clustering with degree in a principal-component model motivated by node betweenness and core--periphery structure \cite{Meghanathan2026}. Both use sparse focal neighborhoods as evidence of structural relevance. Neither, however, requires cohesive structure to persist behind each neighboring gateway. Wang et al. address critical and bridging components through hierarchical community information from local to global scales \cite{Wang2025}. Their construction depends on an inferred community hierarchy, whereas the IC remains partition-free.

Particularly close to the IC in structural objective and information radius is Meghanathan's partition-free Neighborhood-based Bridge Node Centrality (NBNC) tuple \cite{Meghanathan2021}. NBNC uses two-hop neighborhood information and characterizes a bridge vertex through the number of connected components in its neighborhood graph, the algebraic connectivity ratio of that graph, and vertex degree. The IC asks a different semi-local question. A sparsely interconnected focal neighborhood is not sufficient. Cohesive structure must also persist behind individual neighbors after the focal connection has been excluded. The center of a star therefore receives exactly zero under the IC even though its neighborhood graph decomposes into singleton components. The defining IC signature is the combination of low focal clustering with high branch-side leave-one-out cohesion.

Berahmand, Bouyer, and Samadi provide another close methodological reference point \cite{Berahmand2018}. Their semi-local score combines a negative contribution from focal clustering with degree and a positive contribution from clustering among second-order neighbors. The method is designed to identify influential spreaders and is evaluated with SI and SIR diffusion models. The overlap in constituent information is important for positioning the IC. The objectives are nevertheless different. Their validation target is spreading performance, while the IC targets mediation between cohesive regions and does not define success through diffusion.

The paper is organized around five research questions:
\begin{description}[leftmargin=3.8em,style=nextline]
\item[RQ1 --- Identification.] Under which semi-local conditions does the IC attain its maximum and select a connector between cohesive branches?
\item[RQ2 --- Structural robustness.] How does the IC respond to branch multiplicity, a local cohesion defect, and a cohesion-free hub extension, and how does this response differ from simple degree-weighted low-clustering scores?
\item[RQ3 --- Fragmentation threshold.] At which exact structural threshold does the IC cease to prefer an external connector because the nominal branch itself fragments into several locally complete subbranches?
\item[RQ4 --- Empirical stabilization.] Why does the application to the human protein-interaction network exhibit a narrowing degree--IC distribution at high degree, and how is its limiting level related to network clustering?
\item[RQ5 --- Biological application.] How are highly ranked IC vertices positioned with respect to cohesive molecular structures in a real protein--protein interaction network?
\end{description}

\section{Mathematical Definition}
Formally, the IC is a semi-local, partition-free node measure for the topology of a connector vertex between cohesive regions. Let $G=(V,E)$ be a finite simple undirected graph. For $v\in V$, let the \emph{neighborhood} of $v$ be
\[
N(v)=\{u\in V:\{u,v\}\in E\},
\]
and let $d(v)=|N(v)|$ denote its degree. For a vertex set $S\subseteq V$, let $e(S)$ denote the number of edges induced by $S$. Using the standard local clustering coefficient \cite{Watts1998}, define
\begin{equation}
C(v)=
\begin{cases}
\dfrac{2e(N(v))}{d(v)(d(v)-1)}, & d(v)\ge 2,\\[5pt]
0, & d(v)<2.
\end{cases}
\end{equation}
Equivalently, for $d(v)\ge2$,
\[
C(v)=\frac{e(N(v))}{\binom{d(v)}{2}},
\]
so the denominator is the number of possible undirected edges among the $d(v)$ neighbors of $v$. For vertices of degree zero or one, we set $C(v)=0$ because there is no pair of neighbors on which a nontrivial local clustering value can be based.

The candidate set is
\begin{equation}
V_{\ge2}=\{v\in V:d(v)\ge2\}.
\end{equation}
Vertices outside $V_{\ge2}$ are not ranked. This is the minimal structural restriction required for a connector to possess at least two incident directions.

For an ordered adjacent pair $(u,v)$, define the leave-one-out clustering of $u$ relative to $v$ by
\begin{equation}
C_{u\setminus v}=
\begin{cases}
\dfrac{2e(N(u)\setminus\{v\})}{(d(u)-1)(d(u)-2)}, & d(u)\ge3,\\[5pt]
0, & d(u)<3.
\end{cases}
\end{equation}
Equivalently, for $d(u)\ge3$,
\[
C_{u\setminus v}=\frac{e(N(u)\setminus\{v\})}{\binom{d(u)-1}{2}},
\]
where $\binom{d(u)-1}{2}$ is the number of possible edges among the $d(u)-1$ neighbors that remain after $v$ is excluded. Removing the focal vertex from this branch-side neighborhood prevents the connection $uv$ from artificially lowering the cohesion assigned to an otherwise cohesive branch. Thus, $C_{u\setminus v}$ measures how cohesive the local structure behind $u$ remains after $v$ has been excluded from $u$'s neighborhood.

For each $v\in V_{\ge2}$, define the mean branch-side leave-one-out cohesion
\begin{equation}
\overline C_{\mathrm{LOO}}(v)=\frac{1}{d(v)}\sum_{u\in N(v)}C_{u\setminus v}.
\label{eq:cloo}
\end{equation}
The Interconnectedness Coefficient is then
\begin{equation}
\boxed{\displaystyle
\IC(v)=\bigl(1-C(v)\bigr)\overline C_{\mathrm{LOO}}(v).}
\label{eq:ic_product}
\end{equation}
The definition is a product of two quantities. The focal factor $1-C(v)$ rewards low clustering among the immediate neighbors of $v$. The second factor $\overline C_{\mathrm{LOO}}(v)$ rewards high mean cohesion behind those neighbors. The use of a mean rather than a sum is deliberate. Adding further branches of equal cohesion does not, by itself, increase the scale of the score.

\begin{proposition}[Range and maximum]
For every $v\in V_{\ge2}$, $0\le \IC(v)\le1$. Moreover, $\IC(v)=1$ if and only if $C(v)=0$ and $C_{u\setminus v}=1$ for every $u\in N(v)$.
\end{proposition}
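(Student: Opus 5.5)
The plan is to first show that each ingredient lies in $[0,1]$ and then read off the equality case from the product structure. For $v\in V_{\ge2}$ we have $d(v)\ge2$, so $C(v)=e(N(v))/\binom{d(v)}{2}$. Since $0\le e(N(v))\le\binom{d(v)}{2}$ (a simple graph on $d(v)$ vertices has at most that many edges), we get $0\le C(v)\le 1$, hence $0\le 1-C(v)\le1$. Similarly, for each $u\in N(v)$ there are two cases. If $d(u)<3$, then $C_{u\setminus v}=0$ by definition. If $d(u)\ge3$, then $N(u)\setminus\{v\}$ has exactly $d(u)-1\ge2$ vertices because $v\in N(u)$, so $0\le e(N(u)\setminus\{v\})\le\binom{d(u)-1}{2}$, and the denominator is positive. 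Thus $0\le C_{u\setminus v}\le1$ in all cases. Averaging over the $d(v)\ge2$ neighbors in \eqref{eq:cloo} gives $0\le\overline C_{\mathrm{LOO}}(v)\le1$. The product in \eqref{eq:ic_product} of two numbers in $[0,1]$ lies in $[0,1]$, which proves the range claim.

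For the equality case, the ``if'' direction is immediate. If $C(v)=0$ and every $C_{u\setminus v}=1$, then the mean equals $1$ and $\IC(v)=1\cdot1=1$. For ``only if'', suppose $\IC(v)=1$. Since $a,b\in[0,1]$ and $ab=1$, we have $ab\le\min(a,b)$, which forces $a=b=1$. So $1-C(v)=1$, giving $C(v)=0$, and $\overline C_{\mathrm{LOO}}(v)=1$. A mean of $d(v)$ numbers each at most $1$ equals $1$ only if every term equals $1$, since otherwise $\sum_u(1-C_{u\setminus v})>0$. Hence $C_{u\setminus v}=1$ for all $u\in N(v)$.

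The only point needing care, and what I regard as the main (mild) obstacle, is the bookkeeping for the leave-one-out quantity. One must use adjacency of $u$ and $v$ to know that removing $v$ deletes exactly one vertex from $N(u)$, so that the denominator $\binom{d(u)-1}{2}$ is exactly the maximal possible edge count. One must also note that the degenerate branch $d(u)<3$ contributes $0$, never exceeding $1$. As a remark, this also shows that $\IC(v)=1$ forces $d(u)\ge3$ for every neighbor $u$.
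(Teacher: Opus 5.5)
Your proof is correct and follows the same route as the paper. You bound every clustering term in $[0,1]$, so both factors and their product lie in $[0,1]$, and equality then requires both factors to equal one and every leave-one-out term in the mean to equal one; your extra bookkeeping (using $v\in N(u)$ so that $|N(u)\setminus\{v\}|=d(u)-1$, with the denominator exactly the maximal edge count) makes explicit what the paper's remark that ``every clustering term lies in $[0,1]$'' takes for granted.
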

\textit{Proof.} Every clustering term lies in $[0,1]$. Hence both $1-C(v)$ and $\overline C_{\mathrm{LOO}}(v)$ lie in $[0,1]$, as does their product. A product of two numbers in this interval equals one if and only if both factors equal one. The mean equals one if and only if every summand equals one.

The equality case makes the intended semi-local geometry explicit. The focal neighbors are mutually nonadjacent, while the branch-side neighborhoods that remain after exclusion of $v$ are complete. Figure~\ref{fig:ic_maximum} shows the smallest symmetric example.

\begin{figure}[H]
\centering
\begin{tikzpicture}[scale=.78,every node/.style={font=\scriptsize}]
\node[circle,draw,very thick,inner sep=1.4pt] (v) at (0,0) {$v$};
\node[circle,draw,inner sep=1.1pt] (u1) at (-1.45,0) {$u_1$};
\node[circle,draw,inner sep=1.1pt] (u2) at (1.45,0) {$u_2$};
\node[circle,draw,inner sep=.9pt] (l1) at (-2.65,.55) {};
\node[circle,draw,inner sep=.9pt] (l2) at (-2.65,-.55) {};
\node[circle,draw,inner sep=.9pt] (r1) at (2.65,.55) {};
\node[circle,draw,inner sep=.9pt] (r2) at (2.65,-.55) {};
\draw (v)--(u1) (v)--(u2);
\draw (u1)--(l1)--(l2)--(u1);
\draw (u2)--(r1)--(r2)--(u2);
\node at (0,-.90) {$C(v)=0,\quad C_{u_1\setminus v}=C_{u_2\setminus v}=1\quad\Rightarrow\quad \IC(v)=1$};
\end{tikzpicture}
\caption{Exact local geometry for the maximum $\IC(v)=1$. The neighbors $u_1$ and $u_2$ are not adjacent to one another, but each opens into a complete branch-side neighborhood after the focal vertex $v$ is excluded.}
\label{fig:ic_maximum}
\end{figure}

\textbf{Corollary (triangle-free graphs).} If $G$ is triangle-free, then $\IC(v)=0$ for all $v\in V_{\ge2}$. In particular, the IC vanishes identically on trees and simple bipartite graphs.

\textit{Proof.} In a triangle-free graph, no neighborhood induces an edge. Hence $C(v)=0$ for every vertex and likewise $C_{u\setminus v}=0$ for every ordered adjacent pair $(u,v)$. The definition of the IC immediately yields $\IC(v)=0$.

A high IC value does not imply that $v$ is an articulation vertex. The score uses only radius-two information, so longer alternative paths may reconnect locally distinct regions after removal of $v$. Articulation is therefore a possible global consequence of the targeted topology, not a prerequisite of the IC definition.

\subsection{Computation, Complexity, and Reference Implementation}
The definition can be reduced directly to triangle counts and intersections of neighborhoods. Let
\[
T(u)=e(N(u))
\]
denote the number of triangles containing vertex $u$, and for an edge $\{u,v\}\in E$, let
\[
t(u,v)=|N(u)\cap N(v)|
\]
denote the number of common neighbors of $u$ and $v$. Removing $v$ from $N(u)$ removes exactly the $t(u,v)$ edges joining $v$ to common neighbors, so
\[
e(N(u)\setminus\{v\})=T(u)-t(u,v).
\]
Thus, for $d(u)\ge3$,
\begin{equation}
C_{u\setminus v}=\frac{2\bigl(T(u)-t(u,v)\bigr)}{(d(u)-1)(d(u)-2)}.
\label{eq:loo_triangle}
\end{equation}
This expression is algebraically identical to the leave-one-out definition in Eq.~(3). It can be evaluated from precomputed triangle and common-neighbor counts without explicitly constructing the reduced set $N(u)\setminus\{v\}$ for every ordered adjacent pair. Moreover,
\[
T(u)=\frac12\sum_{v\in N(u)}t(u,v),
\]
because every triangle containing $u$ contributes to exactly two terms in this sum, one through each of its two edges incident to $u$. A direct algorithm therefore consists of four steps: (1) determine neighborhood sets and degrees, (2) for each edge $\{u,v\}$ count the intersection $N(u)\cap N(v)$, (3) compute $T(u)$ and all leave-one-out terms, and (4) average these terms for each candidate and multiply by $1-C(v)$. If neighborhoods are stored as hash sets and the smaller neighborhood is probed against the larger, the expected dominant running time is
\begin{equation}
O\!\left(\sum_{\{u,v\}\in E}\min\{d(u),d(v)\}\right),
\end{equation}
because the smaller of the two neighborhood sets can be scanned while membership is tested in the larger set. Thus, an edge joining vertices of degrees 5 and 100 contributes expected intersection work of order 5 rather than 100. This dominant term is accompanied by additional linear work in $|V|+|E|$. The memory requirement remains $O(|V|+|E|)$.

\begin{center}
\fbox{\begin{minipage}{0.94\linewidth}
\small
\textbf{Pseudocode: Computation of the Interconnectedness Coefficient}
\begin{enumerate}[leftmargin=1.5em,itemsep=1pt,topsep=2pt]
\item Determine $N(v)$ and $d(v)$ for all $v\in V$.
\item For each edge $\{u,v\}\in E$, determine the number of common neighbors $t(u,v)=|N(u)\cap N(v)|$.
\item Compute $T(u)=\tfrac12\sum_{v\in N(u)}t(u,v)$ and derive all $C_{u\setminus v}$ according to Eq.~\eqref{eq:loo_triangle}.
\item For each $v\in V_{\ge2}$, average the terms $C_{u\setminus v}$ and set $\IC(v)=(1-C(v))\,d(v)^{-1}\sum_{u\in N(v)}C_{u\setminus v}$.
\end{enumerate}
\textbf{Output:} $\{(v,\IC(v)):v\in V_{\ge2}\}$.
\end{minipage}}
\end{center}

A minimal numerical example first illustrates how the two factors of the score interact with the averaging operation before the exact graph families vary the same structural motif systematically.

\textbf{Minimal numerical example.} If a focal vertex $v$ has degree three, $C(v)=0$, and branch-side leave-one-out values $1,1,0$, then
\[
\IC(v)=(1-0)\frac{1+1+0}{3}=\frac23.
\]
This exact local configuration is realized below by the hub-extended bridge family $H_{p,q,r}$. An executable reference implementation for simple undirected graphs based on NetworkX \cite{Hagberg2008} is included in the source package as \path{interconnectedness_coefficient.py}. It returns values exclusively for the candidate set $V_{\ge2}$.

\section{Exact Calibration Graphs}
The following exact graph families show how the IC responds to controlled structural changes. The sequence increases structural complexity step by step. A star first shows that degree alone does not create an IC signal. A two-sided bridge realizes the maximum. A multi-branch construction tests invariance to the number and size of complete branches. A single missing edge introduces a minimal cohesion defect. A hub extension tests whether a large cohesion-free hub can displace the intended connector. Finally, windmill branches fragment the nominally cohesive regions themselves.

\subsection{Star, Two-Sided Bridge, and Multiple Cohesive Branches}
The \emph{star graph} $S_{r+1}$ consists of a center $s$ and $r\ge2$ leaves, giving $r+1$ vertices in total. The restriction $r\ge2$ ensures that the center belongs to the candidate set. All local and leave-one-out clustering coefficients are zero. Hence
\begin{equation}
\IC(s)=0.
\end{equation}
The center has high degree relative to all other vertices in the star, but it does not connect cohesive regions. Degree-one leaves lie outside the candidate set. Figure~\ref{fig:star} shows the construction.

\begin{figure}[H]
\centering
\begin{tikzpicture}[scale=.86,every node/.style={font=\scriptsize}]
\node[circle,draw,inner sep=1.2pt] (s) at (0,0) {$s$};
\foreach \ang in {30,90,150,210,270,330} {
  \node[circle,draw,inner sep=.8pt] (n\ang) at (\ang:1.05) {};
  \draw (s)--(n\ang);
}
\node at (0,-1.45) {$r$ leaves, $\IC(s)=0$};
\end{tikzpicture}
\caption{Star graph $S_{r+1}$. The center $s$ can have arbitrarily many leaves, but no branch contains cohesive local structure. Consequently, the IC of the center is zero.}
\label{fig:star}
\end{figure}

The \emph{two-sided bridge graph} $B_{p,q}$ is constructed from vertex-disjoint cliques $K_p$ and $K_q$ with gateway vertices $a$ and $b$. A new vertex $x$ is connected to $a$ and $b$. For $p,q\ge3$, $C(x)=0$ and $C_{a\setminus x}=C_{b\setminus x}=1$. Therefore
\begin{equation}
\IC(x)=1.
\end{equation}
Figure~\ref{fig:bridge} shows the bridge graph and the two complete branch-side regions.

\begin{figure}[H]
\centering
\begin{tikzpicture}[x=1cm,y=1cm,every node/.style={font=\scriptsize}]
\node[circle,draw,inner sep=.7pt] (l1) at (-3.0,.58) {};
\node[circle,draw,inner sep=.7pt] (l2) at (-2.4,.58) {};
\node[circle,draw,inner sep=.7pt] (l3) at (-3.0,-.58) {};
\node[circle,draw,inner sep=.7pt] (l4) at (-2.4,-.58) {};
\node[circle,draw,inner sep=1.0pt] (a) at (-1.65,0) {$a$};
\draw (l1)--(l2)--(l3)--(l4)--cycle (l1)--(l3) (l1)--(l4) (l2)--(l4);
\draw (a)--(l1) (a)--(l2) (a)--(l3) (a)--(l4);
\node[draw,rounded corners,fit=(a)(l1)(l2)(l3)(l4),inner sep=3pt,label=above:$K_p$] {};
\node[circle,draw,very thick,inner sep=1.2pt] (x) at (0,0) {$x$};
\node[circle,draw,inner sep=1.0pt] (b) at (1.65,0) {$b$};
\node[circle,draw,inner sep=.7pt] (r1) at (2.4,.58) {};
\node[circle,draw,inner sep=.7pt] (r2) at (3.0,.58) {};
\node[circle,draw,inner sep=.7pt] (r3) at (2.4,-.58) {};
\node[circle,draw,inner sep=.7pt] (r4) at (3.0,-.58) {};
\draw (r1)--(r2)--(r3)--(r4)--cycle (r1)--(r3) (r1)--(r4) (r2)--(r4);
\draw (b)--(r1) (b)--(r2) (b)--(r3) (b)--(r4);
\node[draw,rounded corners,fit=(b)(r1)(r2)(r3)(r4),inner sep=3pt,label=above:$K_q$] {};
\draw (a)--(x)--(b);
\node at (0,-1.15) {$\IC(x)=1$};
\end{tikzpicture}
\caption{Two-sided bridge graph $B_{p,q}$. The connector $x$ is unclustered, while both gateway vertices open into complete branch-side neighborhoods after $x$ is excluded. This realizes the exact maximum $\IC(x)=1$.}
\label{fig:bridge}
\end{figure}

\paragraph{Why leave-one-out cohesion is required.}
The graph $B_{p,q}$ also makes the design choice in Eq.~(3) explicit. If ordinary clustering of a gateway were used instead, the focal connector $x$ would itself reduce the measured cohesion of an otherwise complete branch. For the left gateway $a$, its $p$ neighbors consist of $x$ and the $p-1$ other vertices of $K_p$. Since $x$ has no edges to those clique vertices,
\begin{equation}
C(a)=\frac{\binom{p-1}{2}}{\binom{p}{2}}=\frac{p-2}{p}<1,
\end{equation}
and analogously $C(b)=(q-2)/q$. A naive variant based on ordinary gateway clustering,
\[
\widetilde I(x)=(1-C(x))\frac{C(a)+C(b)}{2},
\]
would therefore assign
\begin{equation}
\widetilde I(x)=\frac12\left(\frac{p-2}{p}+\frac{q-2}{q}\right)<1,
\end{equation}
even though both branches are internally complete. Its value would also depend on the arbitrary branch orders $p$ and $q$. In contrast, excluding the focal connection gives $C_{a\setminus x}=C_{b\setminus x}=1$ and hence $\IC(x)=1$ independently of clique size. The leave-one-out operation prevents the connector edge itself from being interpreted as a cohesion defect behind the gateway.

The same result holds for an arbitrary number of complete branches. The \emph{multi-branch graph} $M_{m_1,\ldots,m_k}$ extends the two-sided bridge to $k$ cohesive branches. Formally, let $M_{m_1,\ldots,m_k}$ contain vertex-disjoint cliques $K_{m_1},\ldots,K_{m_k}$ with $m_j\ge3$, one gateway vertex $u_j$ in each clique, and a new vertex $v$ adjacent exactly to $u_1,\ldots,u_k$, where $k\ge2$.

\begin{proposition}[Calibration for two and multiple branches]
For the connector $v$ in $M_{m_1,\ldots,m_k}$,
\[
C(v)=0,\qquad C_{u_j\setminus v}=1\quad(1\le j\le k),\qquad \IC(v)=1.
\]
Thus, neither the number of branches nor clique size increases the score once all branches already satisfy the maximum-cohesion conditions.
\end{proposition}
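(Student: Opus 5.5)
The plan is a direct computation from the definitions in Eqs.~(1)--(4), followed by an appeal to the equality case of the Range and maximum proposition. First I record the neighborhoods in $M_{m_1,\ldots,m_k}$. The connector $v$ is adjacent exactly to $u_1,\ldots,u_k$, so $d(v)=k\ge2$ and $v\in V_{\ge2}$. Each gateway satisfies
\[
N(u_j)=\bigl(V(K_{m_j})\setminus\{u_j\}\bigr)\cup\{v\},
\]
so $d(u_j)=m_j\ge3$. This puts $C_{u_j\setminus v}$ in the nontrivial case of Eq.~(3).

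To show $C(v)=0$, I observe that the gateways lie in pairwise vertex-disjoint cliques. The construction adds no edges other than those inside the cliques and those incident to $v$. Hence no edge $u_iu_j$ with $i\ne j$ exists, so $e(N(v))=0$ and $C(v)=0$.

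For the branch side, $N(u_j)\setminus\{v\}=V(K_{m_j})\setminus\{u_j\}$. This set is a clique on $m_j-1$ vertices and induces $\binom{m_j-1}{2}$ edges. Since $d(u_j)-1=m_j-1$, the equivalent binomial form of Eq.~(3) gives $C_{u_j\setminus v}=1$. As a cross-check, the triangle-count form \eqref{eq:loo_triangle} gives the same value: $T(u_j)=\binom{m_j-1}{2}$, and $t(u_j,v)=0$ because $v$ has no neighbor inside $K_{m_j}$.

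It then follows from Eq.~\eqref{eq:cloo} that $\overline C_{\mathrm{LOO}}(v)=1$, so $\IC(v)=(1-0)\cdot1=1$. Equivalently, the equality conditions of the Range and maximum proposition are met. The closing remark in the statement follows because $k$ and the $m_j$ appear nowhere in these values: the score is already at its upper bound $1$, so it cannot increase with more branches or larger cliques.

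No step is a genuine obstacle. The only points needing care are the bookkeeping that $d(u_j)=m_j$ includes $v$, so the leave-one-out denominator is $\binom{m_j-1}{2}$ rather than $\binom{m_j}{2}$, and the check that the hypothesis $m_j\ge3$ is exactly what keeps the gateways out of the degenerate $d(u)<3$ case of Eq.~(3).
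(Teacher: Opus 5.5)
Your proposal is correct and matches the paper's proof step for step: gateways in distinct cliques are nonadjacent, so $C(v)=0$; the remainder $N(u_j)\setminus\{v\}$ induces $K_{m_j-1}$, so $C_{u_j\setminus v}=1$; and the mean of $k$ unit terms gives $\IC(v)=1$, with your check $d(u_j)=m_j\ge3$ making explicit a point the paper leaves implicit. One harmless slip appears in your optional cross-check: $v$ does have a neighbor in $K_{m_j}$, namely $u_j$, so the correct reason for $t(u_j,v)=0$ is that $N(v)=\{u_1,\ldots,u_k\}$ is disjoint from $N(u_j)$, i.e., $v$ has no neighbor in $V(K_{m_j})\setminus\{u_j\}$.
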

\textit{Proof.} Gateway vertices from distinct cliques are mutually nonadjacent, so $C(v)=0$. After removing $v$ from $N(u_j)$, the remaining $m_j-1$ vertices induce $K_{m_j-1}$, yielding leave-one-out clustering equal to one. The mean of $k$ unit terms is one.

\paragraph{A single cohesion defect.}
The clique construction defines the exact maximum but does not imply that empirical branches must be complete. A minimal perturbation already shows a gradual response. Starting from $B_{p,q}$ with $p\ge4$, remove one edge between two non-gateway vertices of the left clique while leaving the right clique unchanged. The focal vertex still satisfies $C(x)=0$, whereas
\begin{equation}
C_{a\setminus x}=1-\frac{1}{\binom{p-1}{2}},
\qquad
C_{b\setminus x}=1.
\end{equation}
Consequently,
\begin{equation}
\IC(x)=1-\frac{1}{2\binom{p-1}{2}}.
\end{equation}
For $p=4$ this gives $\IC(x)=5/6\approx0.833$, while for $p=10$ it gives $\IC(x)=71/72\approx0.986$. Thus a single missing branch edge lowers the ideal score by a controlled amount rather than destroying the connector signal. The effect also decreases with branch size because the same missing edge represents a smaller fraction of the possible branch-side connections.

\subsection{Hub-Extended Bridge and Separation from Degree-Weighted Low Clustering}
Form the \emph{hub-extended bridge family} $H_{p,q,r}$ from $B_{p,q}$ by adding a vertex $s$ with $r$ leaf neighbors and the edge $sx$, where $p,q\ge3$ and $r\ge1$. The connector $x$ now has three incident directions. The two clique gateways remain maximally cohesive after exclusion of $x$, while the direction through the hub $s$ has leave-one-out clustering zero. Hence
\begin{equation}
\IC(x)=\frac{2}{3}.
\end{equation}
Figure~\ref{fig:hub_extension} shows the two cohesive branches together with the cohesion-free hub arm.

\begin{figure}[H]
\centering
\begin{tikzpicture}[scale=.82,every node/.style={font=\scriptsize}]
\node[draw,rounded corners,minimum width=2.4cm,minimum height=1.4cm] (L) at (-4,0) {$K_p$};
\node[circle,draw,fill=white,inner sep=1.3pt] (a) at (-2.3,0) {$a$};
\node[circle,draw,very thick,inner sep=1.5pt] (x) at (0,0) {$x$};
\node[circle,draw,fill=white,inner sep=1.3pt] (b) at (2.3,0) {$b$};
\node[draw,rounded corners,minimum width=2.4cm,minimum height=1.4cm] (R) at (4,0) {$K_q$};
\draw (L)--(a)--(x)--(b)--(R);
\node[draw,rectangle,inner sep=2pt] (s) at (0,-1.45) {$s$};
\draw (x)--(s);
\foreach \xx in {-1.1,-.55,0,.55,1.1} {\node[circle,draw,inner sep=.9pt] at (\xx,-2.55) {}; \draw (s)--(\xx,-2.55);}
\node at (1.75,-2.55) {$\cdots\ r$ leaves};
\node at (0,1.02) {connector between cohesive regions};
\node[right=4pt of s] {cohesion-free hub arm};
\end{tikzpicture}
\caption{Hub-extended bridge family $H_{p,q,r}$. The connector $x$ joins two complete branches and the hub $s$. Increasing $r$ raises the degree of $s$ without creating cohesive structure behind its leaf neighbors.}
\label{fig:hub_extension}
\end{figure}

For a gateway vertex attached to a clique of order $m$, the IC can be written directly. If $m\ge4$, its focal factor is $1-C=2/m$. Among its $m$ adjacent directions, $m-1$ lead into the complete clique remainder and contribute leave-one-out value one, while the direction toward $x$ contributes zero. Hence its mean leave-one-out cohesion is $(m-1)/m$, and
\[
\IC(\text{gateway in }K_m)=\frac{2(m-1)}{m^2}\qquad(m\ge4).
\]
For $m=3$, the leave-one-out neighborhood in the relevant boundary case has only one vertex and the defined value is zero.

\begin{proposition}[Hub-extension robustness]
For all integers $p,q\ge3$ and $r\ge1$, $x$ is the unique global maximizer of the IC over $V_{\ge2}$ in $H_{p,q,r}$.
\end{proposition}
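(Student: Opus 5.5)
Since $V(H_{p,q,r})$ splits into a handful of structurally identical vertex classes, the plan is to compute the IC on each class explicitly and show that every value is strictly below $\IC(x)=2/3$. The classes are: the connector $x$; the hub $s$; the $r$ leaves of $s$ (degree one, so excluded from $V_{\ge2}$); the gateways $a$ and $b$; and the non-gateway vertices of $K_p$ and $K_q$. The value $\IC(x)=2/3$ was derived above: $N(x)=\{a,b,s\}$ is independent, $C_{a\setminus x}=C_{b\setminus x}=1$, and $C_{s\setminus x}=0$ because $N(s)\setminus\{x\}$ consists of leaves. I will re-verify that $C_{s\setminus x}=0$ also holds when $r=1$, since then $d(s)=2<3$ and the leave-one-out term is zero by convention.

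\textbf{Hub $s$.} When $r\ge1$, the vertex $s$ has degree $r+1\ge2$, so it is a candidate. Its neighborhood contains no edges, so $C(s)=0$. Its leaves have degree one and contribute $C_{\ell\setminus s}=0$. The vertex $x$ has degree three, and $N(x)\setminus\{s\}=\{a,b\}$ is non-adjacent, so $C_{x\setminus s}=0$. Hence $\IC(s)=0$.

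\textbf{Gateways.} For the gateway $a$ in $K_p$ I reuse the formula stated just before the proposition. This needs care, because in $H_{p,q,r}$ the vertex $x$ has degree three rather than two. The direction from $a$ toward $x$ still contributes $C_{x\setminus a}=0$, since $\{b,s\}$ is non-adjacent. So the value is $2(p-1)/p^2$ for $p\ge4$, and this is at most $6/16<2/3$. For $p=3$, the vertex $a$ has degree three and $C(a)=1/3$. Its clique neighbors $w$ have degree two, so $C_{w\setminus a}=0$, which gives $\IC(a)=0$. The same argument applies to $b$ with $q$ in place of $p$.

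\textbf{Non-gateway clique vertices.} A non-gateway vertex $w\in K_p$ has neighborhood $K_p\setminus\{w\}$, which is complete. Hence $C(w)=1$ and $\IC(w)=0$.

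Combining the cases, every candidate other than $x$ has IC at most $\max\{0,\,2(m-1)/m^2\}\le 3/8<2/3$, so $x$ is the unique maximizer. The only real obstacle is the gateway case: I must check carefully that the direction toward $x$ contributes exactly zero in the hub-extended graph, and handle the degenerate small cases $m=3$ and $r=1$ correctly. The bound $2(m-1)/m^2\le 3/8$ holds because this expression is decreasing for $m\ge2$, so its maximum over $m\ge4$ is attained at $m=4$.
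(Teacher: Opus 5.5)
Your proposal is correct and takes essentially the same route as the paper. The paper also tabulates the IC of every vertex class of $H_{p,q,r}$ (connector $2/3$, hub and non-gateway clique vertices $0$, leaves excluded, gateways $0$ for $m=3$ and $2(m-1)/m^2\le 3/8$ for $m\ge4$) and compares these values with $\IC(x)=2/3$; your explicit checks that $C_{x\setminus a}=0$ (now relying on $\{b,s\}$ being non-adjacent, since $d(x)=3$) and that the degree conventions give zero in the $r=1$ and $m=3$ cases fill in details the paper's table leaves implicit.
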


Table~\ref{tab:hub_scores} lists the exact IC values of all vertex types needed for the comparison.

\begin{table}[H]
\centering
\caption{Exact IC scores for the vertex types in the hub-extended bridge family $H_{p,q,r}$.}
\label{tab:hub_scores}
\begin{tabular}{lc}
\toprule
Vertex type & IC \\
\midrule
Connector $x$ & $2/3$ \\
Hub $s$ & $0$ \\
Gateway $a$ in $K_p$ & $0$ if $p=3$, otherwise $2(p-1)/p^2$ \\
Gateway $b$ in $K_q$ & $0$ if $q=3$, otherwise $2(q-1)/q^2$ \\
Non-gateway vertex in $K_p$ or $K_q$ & $0$ \\
Leaf vertex of $s$ & outside $V_{\ge2}$ (degree 1) \\
\bottomrule
\end{tabular}
\end{table}

\textit{Proof.} Table~\ref{tab:hub_scores} leaves only the two clique gateways as potential positive competitors. For $m\ge4$,
\[
\frac{2(m-1)}{m^2}\le\frac{3}{8}<\frac{2}{3},
\]
and the gateway score is zero when $m=3$. Hence $x$ is uniquely maximal. The parameter $r$ appears neither in the score of $x$ nor in that of any competing vertex with positive score.

A natural simpler alternative is to combine low focal clustering with degree. The same graph shows why that combination is insufficient for the structural role targeted by the IC.

\begin{proposition}[Separation from degree-weighted low clustering]
In $H_{p,q,r}$, the factor $1-C(v)$ assigns the same value to the cohesive connector $x$ and the cohesion-free hub $s$. If low clustering is weighted directly by degree, then
\[
d(x)\bigl(1-C(x)\bigr)=3,
\qquad
d(s)\bigl(1-C(s)\bigr)=r+1.
\]
Thus, for every $r\ge3$, the degree-weighted low-clustering score ranks $s$ above $x$, whereas $\IC(x)=2/3>0=\IC(s)$ for all $r\ge1$.
\end{proposition}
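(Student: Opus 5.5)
The plan is to read off the local structure of $x$ and $s$ in $H_{p,q,r}$ and compute the two clustering values and degrees directly from the definition of $C$. I would then combine these with the IC values already recorded in Table~\ref{tab:hub_scores} and in the proof of the hub-extension proposition.

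\emph{Focal factors.} The neighborhood of $x$ is $N(x)=\{a,b,s\}$, so $d(x)=3$. These three vertices are pairwise nonadjacent:
\begin{itemize}
\item $a$ and $b$ lie in vertex-disjoint cliques;
\item $s$ is adjacent only to $x$ and its $r$ leaves.
\end{itemize}
Hence $e(N(x))=0$, so $C(x)=0$ and $1-C(x)=1$.

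The neighborhood of $s$ consists of $x$ together with $r$ leaves, so $d(s)=r+1\ge2$ and $s\in V_{\ge2}$. No leaf is adjacent to $x$ or to another leaf, so $e(N(s))=0$. Thus $C(s)=0$ and $1-C(s)=1$. This establishes the first claim: the factor $1-C$ is identical, and equal to one, for $x$ and $s$.

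\emph{Degree-weighted comparison.} Multiplying by degree gives $d(x)(1-C(x))=3$ and $d(s)(1-C(s))=r+1$. The inequality $r+1>3$ holds exactly when $r\ge3$, so the degree-weighted score strictly ranks $s$ above $x$ in that range. At $r=2$ the two scores tie, which is why the statement is restricted to $r\ge3$.

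\emph{IC values.} We have $\IC(x)=2/3$ from the hub-extended bridge computation. For $\IC(s)$, note that every neighbor $u$ of $s$ has $C_{u\setminus s}=0$:
\begin{itemize}
\item each leaf has degree one and falls under the $d(u)<3$ case;
\item $x$ has degree three, but $N(x)\setminus\{s\}=\{a,b\}$ induces no edge.
\end{itemize}
So $\overline C_{\mathrm{LOO}}(s)=0$ and $\IC(s)=0$. This holds for every $r\ge1$, matching the table.

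\emph{Main obstacle.} There is little genuine difficulty here. The one point needing care is the boundary case $r=1$. There $d(s)=2$, so $s$ is still a candidate but has a single leaf. I would check explicitly that the leave-one-out terms at $s$ vanish both through the degree convention for the leaf and through the empty edge set among $\{a,b\}$ for $x$.
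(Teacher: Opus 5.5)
Your proposal is correct and follows essentially the same route as the paper. Both read off $C(x)=C(s)=0$ and the degrees $3$ and $r+1$ directly, and both take $\IC(x)=2/3$ from the hub-extended bridge computation; your explicit check that $C_{x\setminus s}=0$ (because $a$ and $b$ are nonadjacent) and that every leaf term vanishes simply spells out the paper's remark that every direction beyond $s$ is cohesion-free.
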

\textit{Proof.} Both $x$ and $s$ have zero focal clustering, so $1-C(x)=1-C(s)=1$. Their degrees are $d(x)=3$ and $d(s)=r+1$, which gives the displayed values. The IC separates the vertices because its second factor evaluates cohesion behind adjacent vertices. Two of the three directions incident to $x$ open into complete branches. Every direction beyond $s$ is cohesion-free.

This construction does not claim that degree is generally irrelevant. It isolates one specific perturbation in which arbitrarily many leaves are added behind an intermediate hub without creating any new cohesive branch. The result gives an exact counterexample to the idea that low focal clustering, even when weighted by degree, is sufficient to encode the connector role targeted by the IC.

\subsection{Density-Controlled Windmill Branches}
The preceding family keeps both cohesive branches complete. To vary branch cohesion in a controlled way, consider a windmill construction. Let $c$ be the number of mutually disjoint complete leaf groups behind a gateway, and let $h\ge3$ be the number of vertices in each group. Thus, $ch$ leaf-group vertices lie behind the gateway. A \emph{windmill branch} $W_{c,h}$ contains a gateway vertex $a$ and $c$ pairwise disjoint leaf groups $P_1,\ldots,P_c$ of order $h$, such that each $\{a\}\cup P_j$ induces $K_{h+1}$ and no edges connect distinct leaf groups. The graph $J_{c,h,r}$ consists of two such windmill branches with gateway vertices $a$ and $b$, a vertex $x$ adjacent to $a$, $b$, and $s$, and $r\ge1$ leaves adjacent to $s$.

Figure~\ref{fig:windmill} shows both windmill branches explicitly.

\begin{figure}[H]
\centering
\begin{tikzpicture}[scale=.68,every node/.style={font=\scriptsize}]
\node[circle,draw,inner sep=1.4pt] (a) at (-2.4,0) {$a$};
\node[circle,draw,very thick,inner sep=1.4pt] (x) at (0,0) {$x$};
\node[circle,draw,inner sep=1.4pt] (b) at (2.4,0) {$b$};
\draw (a)--(x)--(b);
\node[draw,rounded corners,minimum width=1.85cm,minimum height=.68cm] (L1) at (-5.0,1.35) {$P_1\cong K_h$};
\node[draw,rounded corners,minimum width=1.85cm,minimum height=.68cm] (L2) at (-5.0,0) {$P_2\cong K_h$};
\node[draw,rounded corners,minimum width=1.85cm,minimum height=.68cm] (Lc) at (-5.0,-1.35) {$P_c\cong K_h$};
\draw (a)--(L1.east) (a)--(L2.east) (a)--(Lc.east);
\node at (-5.0,-.72) {$\vdots$};
\node[draw,rounded corners,minimum width=1.85cm,minimum height=.68cm] (R1) at (5.0,1.35) {$Q_1\cong K_h$};
\node[draw,rounded corners,minimum width=1.85cm,minimum height=.68cm] (R2) at (5.0,0) {$Q_2\cong K_h$};
\node[draw,rounded corners,minimum width=1.85cm,minimum height=.68cm] (Rc) at (5.0,-1.35) {$Q_c\cong K_h$};
\draw (b)--(R1.west) (b)--(R2.west) (b)--(Rc.west);
\node at (5.0,-.72) {$\vdots$};
\node[draw,rectangle,inner sep=2pt] (s) at (0,-1.70) {$s$};
\draw (x)--(s);
\foreach \xx in {-0.75,0,.75} {\node[circle,draw,inner sep=.8pt] at (\xx,-2.65) {}; \draw (s)--(\xx,-2.65);}
\node at (1.35,-2.65) {$\cdots\ r$ leaves};
\end{tikzpicture}
\caption{Density-controlled windmill family $J_{c,h,r}$. Both gateway vertices $a$ and $b$ connect to $c$ mutually disconnected complete leaf groups of order $h$. Increasing $c$ fragments each branch-side neighborhood while preserving complete cohesion within every individual group.}
\label{fig:windmill}
\end{figure}

From the perspective of $x$, the leave-one-out cohesion behind either gateway is
\begin{equation}
C_{a\setminus x}=C_{b\setminus x}
=\frac{c\binom{h}{2}}{\binom{ch}{2}}
=\frac{h-1}{ch-1}.
\end{equation}
The numerator counts the edges present inside the $c$ complete groups. The denominator counts all possible edges among the $ch$ vertices in the branch-side neighborhood. For fixed $h$, this value equals one when $c=1$ and decreases monotonically toward zero as the branch is split into more mutually disconnected complete groups.

Vertex $x$ has the three leave-one-out terms $(h-1)/(ch-1)$, $(h-1)/(ch-1)$, and $0$, with $C(x)=0$. A gateway such as $a$ has $d(a)=ch+1$ and
\[
C(a)=\frac{2c\binom{h}{2}}{(ch+1)ch}=\frac{h-1}{ch+1},
\qquad
1-C(a)=\frac{(c-1)h+2}{ch+1}.
\]
Its $ch$ leaf-group neighbors contribute leave-one-out value one, while $x$ contributes zero. Hence
\[
\overline C_{\mathrm{LOO}}(a)=\frac{ch}{ch+1}.
\]
The relevant IC values are therefore
\begin{align}
\IC(x)&=\frac{2(h-1)}{3(ch-1)},\\
\IC(a)=\IC(b)&=\frac{ch\bigl((c-1)h+2\bigr)}{(ch+1)^2},\\
\IC(s)&=0.
\end{align}
Every vertex within a leaf group has IC value zero because its ordinary clustering coefficient equals one.

\begin{proposition}[Fragmentation threshold]
For all integers $c\ge1$, $h\ge3$, and $r\ge1$, the unique IC maximizer is $x$ when $c=1$, whereas the maximizer set is $\{a,b\}$ when $c\ge2$.
\end{proposition}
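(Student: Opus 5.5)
The plan is to first account for every vertex type in $J_{c,h,r}$ and then compare the only positive competitors.

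\textbf{Step 1: vertices with score zero.} The candidate set $V_{\ge2}$ consists of $x$, $s$ (degree $r+1\ge2$), $a$, $b$, and all leaf-group vertices; the leaves of $s$ have degree one and are excluded.
\begin{itemize}
\item Each leaf-group vertex $w\in P_j$ has $N(w)=(P_j\setminus\{w\})\cup\{a\}$, which induces a clique, so $C(w)=1$ and $\IC(w)=0$.
\item The hub $s$ has $C(s)=0$. Its neighbors are $x$ and the leaves. We have $C_{x\setminus s}=0$ because $N(x)\setminus\{s\}=\{a,b\}$ is an independent pair. Each leaf has degree one, so its leave-one-out value is also zero. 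Hence $\IC(s)=0$.
\end{itemize}
Only $x$, $a$ and $b$ can be positive, and $\IC(a)=\IC(b)$ by symmetry. So it suffices to compare $\IC(x)=\frac{2(h-1)}{3(ch-1)}$ with $\IC(a)=\frac{ch((c-1)h+2)}{(ch+1)^2}$, using the displayed formulas.

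\textbf{Step 2: the case $c=1$.} Here $\IC(x)=2/3$ and $\IC(a)=2h/(h+1)^2$. This is exactly the gateway formula $2(m-1)/m^2$ with $m=h+1\ge4$. As in the proof of Hub-extension robustness, $2(m-1)/m^2\le 3/8<2/3$. Therefore $x$ is the unique maximizer. (In fact $J_{1,h,r}=H_{h+1,h+1,r}$, so that proposition applies directly.)

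\textbf{Step 3: the case $c\ge2$.} I need to show $\IC(a)>\IC(x)$; the maximizer set is then $\{a,b\}$. The strategy is to bound each side crudely.

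For the upper bound on $\IC(x)$: since $h\ge3$, we have $ch-1\ge c(h-1)$. Hence
\[
\IC(x)\le\frac{2}{3c}\le\frac13 .
\]

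For the lower bound on $\IC(a)$: write $\IC(a)=\frac{ch}{ch+1}\cdot\frac{(c-1)h+2}{ch+1}$.
\begin{itemize}
\item The first factor satisfies $\frac{ch}{ch+1}\ge\frac{6}{7}$, since $ch\ge6$.
\item The second factor satisfies $\frac{(c-1)h+2}{ch+1}\ge\frac{c-1}{c}\ge\frac12$. The first inequality is equivalent to $2c\ge c-1$ after clearing denominators.
\end{itemize}
So $\IC(a)\ge 3/7>1/3\ge\IC(x)$.

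Equivalently, one can cross-multiply and verify the polynomial inequality $3ch((c-1)h+2)(ch-1)>2(h-1)(ch+1)^2$ directly. The bound chain above avoids that expansion.

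\textbf{Main obstacle.} The only real difficulty is getting uniform bounds in Step 3 that hold down to the boundary case $c=2$, $h=3$. There the values are $\IC(a)=6\cdot5/49=30/49$ and $\IC(x)=4/15$, so the margin is comfortable. I will also verify at this point that the monotonicity claims used in the bounds hold for all $h\ge3$, not just asymptotically.

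Finally, I will note that $r$ enters no positive score. This makes both conclusions uniform in $r\ge1$.
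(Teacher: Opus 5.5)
Your proof is correct and follows essentially the paper's route: you isolate $x,a,b$ as the only positive scores, compare $2/3$ with $2h/(h+1)^2$ for $c=1$, and for $c\ge2$ separate the scores by constant bounds around $1/3$ (the paper obtains $\IC(a)>1/2$ from the identity $2ch((c-1)h+2)-(ch+1)^2=c(c-2)h^2+2ch-1$, whereas your factorization gives the weaker but sufficient bound $3/7$). One small nit: $ch-1\ge c(h-1)$ holds because $c\ge1$, not because $h\ge3$.
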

\textit{Proof.} Only $x,a,b$ have positive IC values. For $c=1$, their scores reduce to $2/3$ and $2h/(h+1)^2$, respectively. The inequality $2h/(h+1)^2<2/3$ is equivalent to $3h<(h+1)^2$, which holds because $(h+1)^2-3h=h^2-h+1>0$. For $c\ge2$,
\[
\IC(x)=\frac{2(h-1)}{3(ch-1)}<\frac13,
\]
because $2(h-1)<ch-1$. At the same time,
\[
2ch\bigl((c-1)h+2\bigr)-(ch+1)^2
=c(c-2)h^2+2ch-1>0,
\]
so $\IC(a)=\IC(b)>1/2$. Thus, for all $c\ge2$, both $a$ and $b$ strictly exceed $x$ and are the only maximizers.

The transition has a direct structural interpretation. For $c=1$, deleting $x$ separates the two complete branch remainders, so $x$ is an articulation vertex in this calibration graph. Once $c\ge2$, each gateway itself opens into several complete leaf groups. The stronger semi-local decomposition then occurs at $a$ and $b$, and the IC maximum shifts inward. The threshold is an exact change in the local organization represented by the graph family.

\section{Application to the Human Interactome Map}
The protein--protein interaction network provides an empirical application of the IC to a large biological graph. The IC was previously applied to a Human Interactome Map \cite{Wiebringhaus2009} and later presented in the context of intermodular connections \cite{WiebringhausBrinck2014}.

\subsection{Network and Summary Statistics}
The application uses a Human Interactome Map (HIM) \cite{Wiebringhaus2009}. The graph used for the IC analysis contains 9,222 vertices and 36,324 edges. Its mean ordinary local clustering coefficient is
\begin{equation}
\langle C\rangle = 0.114.
\end{equation}

\subsection{Degree-Dependent Stabilization of the IC}
Figure~\ref{fig:him_degree_ic} shows the degree--IC distribution. Three features are visually apparent. First, low-degree vertices span several orders of magnitude in IC. Second, the distribution narrows with increasing degree and approaches a band slightly above $0.1$. Third, the isolated rightmost hub, at a degree of roughly 750, lies close to the background level, whereas the rectangle marks several positive deviations at degrees around 20--30.

\begin{figure}[H]
\centering
\includegraphics[alt={Degree--IC scatter plot for the Human Interactome Map, with high-degree vertices narrowing toward a low IC level.},width=.78\textwidth]{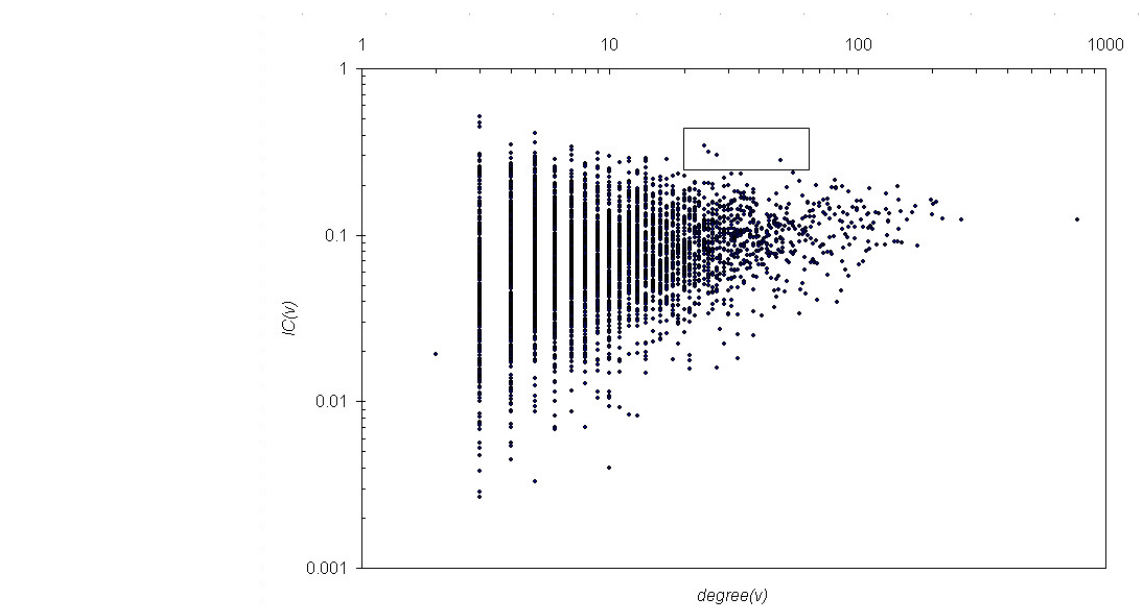}
\caption{Degree--IC distribution of the Interconnectedness Coefficient (IC) in the Human Interactome Map \cite{Wiebringhaus2009}. Both axes are logarithmic. The point cloud narrows toward the background level near $0.114$ as degree increases. The rectangle marks the selected positive deviations considered in the subsequent biological interpretation. Vertices with $\IC=0$ are not visible on the logarithmic ordinate, and vertices with identical or nearly identical coordinates may overlap.}
\label{fig:him_degree_ic}
\end{figure}

The product representation in Eq.~\eqref{eq:ic_product} makes the high-degree behavior directly visible.

\Needspace{10\baselineskip}
\begin{proposition}[High-degree stabilization]
Let $(v_n)$ be a sequence of vertices with $d(v_n)\to\infty$. If
\[
C(v_n)\to c_\infty
\qquad\text{and}\qquad
\overline C_{\mathrm{LOO}}(v_n)\to\mu_\infty,
\]
then
\begin{equation}
\IC(v_n)\longrightarrow (1-c_\infty)\mu_\infty.
\end{equation}
In particular, if focal clustering of high-degree vertices tends to zero, then $\IC(v_n)\to\mu_\infty$.
\end{proposition}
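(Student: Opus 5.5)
The plan is to treat this as a direct consequence of the product representation in Eq.~\eqref{eq:ic_product} together with the algebra of limits for real sequences. First, since $d(v_n)\to\infty$, all but finitely many $v_n$ satisfy $d(v_n)\ge2$. So from some index on, each $v_n$ lies in the candidate set $V_{\ge2}$ and $\IC(v_n)$ is defined. The finitely many initial terms are irrelevant for the limit. For those eligible indices I will write
\[
\IC(v_n)=a_n b_n,\qquad a_n=1-C(v_n),\quad b_n=\overline C_{\mathrm{LOO}}(v_n).
\]
By hypothesis $a_n\to 1-c_\infty$ and $b_n\to\mu_\infty$.

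The key step is the continuity of multiplication. I will use the elementary estimate
\[
|a_nb_n-(1-c_\infty)\mu_\infty|\le |a_n|\,|b_n-\mu_\infty|+|\mu_\infty|\,|a_n-(1-c_\infty)|.
\]
Here $|a_n|\le1$ by the Proposition on range and maximum, since every clustering term lies in $[0,1]$. The limits $c_\infty$ and $\mu_\infty$ also lie in $[0,1]$, because $[0,1]$ is closed. Hence the right-hand side is at most $|b_n-\mu_\infty|+|a_n-(1-c_\infty)|$, which tends to zero. This proves the main claim.

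The special case then follows by setting $c_\infty=0$, which gives $(1-0)\mu_\infty=\mu_\infty$.

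There is no serious obstacle; the only points needing care are bookkeeping. One is ensuring the sequence eventually lies in $V_{\ge2}$, so that $\IC$ is defined. The other is noting that the hypotheses concern limits of the two factors, so no uniformity over the graph is needed. If the vertices $v_n$ live in a sequence of graphs rather than one fixed graph, the argument is unchanged, since only the two real sequences enter. I would close by remarking that the statement is purely analytic and that any empirical relevance, such as $\mu_\infty$ being close to $\langle C\rangle$, is an assumption about the network, not part of the proof.
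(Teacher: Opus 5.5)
Your proof is correct and is the same argument as the paper's, which simply invokes the product representation of the IC and continuity of multiplication. Your explicit estimate and the check that $v_n$ eventually lies in $V_{\ge2}$ spell out what the paper leaves implicit; one small sharpening of your closing remark is that degrees in a fixed finite graph are bounded by $|V|-1$, so the hypothesis $d(v_n)\to\infty$ forces the $v_n$ to come from a sequence of growing graphs, which is therefore not an optional variant but the only setting in which the statement is nonvacuous.
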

\textit{Proof.} The result follows immediately from the product representation of the IC and continuity of multiplication.

The condition $d(v_n)\to\infty$ does not determine the limit of $C(v_n)$. In particular, increasing degree alone implies neither $C(v_n)\to1$ nor $C(v_n)\to0$. The relevant regime for the stabilization considered here is one in which focal clustering becomes small for high-degree vertices, so that $1-C(v_n)\to1$. In the PPI network analyzed here, this regime includes high-degree vertices, i.e. hubs, with low local clustering. The role of the factor $1-C(v)$ is therefore directly visible.

This elementary limit provides a useful interpretation of Figure~\ref{fig:him_degree_ic}. In a hierarchical network regime in which high-degree vertices are only weakly clustered, the focal sparsity factor $1-C(v)$ approaches one. The IC is then increasingly determined by the mean branch-side cohesion of adjacent vertices. If this mean stabilizes, the score stabilizes as well even while degree continues to increase. Normalization by $d(v)$ converts increasing neighborhood size into an average rather than an extensive reward for high degree.

The numerical level is informative as well. The network-wide mean clustering coefficient of the HIM is $0.114$. If the branch-side mean at high degree is approximated by this characteristic clustering level of the network,
\[
\mu_\infty \approx \langle C\rangle = 0.114,
\]
then the limiting relation predicts
\begin{equation}
\boxed{\IC_{\text{high degree}}\approx 0.114,}
\end{equation}
which is consistent with the plateau visible in the empirical plot. This equality is an approximation rather than an identity implied by the definition, because $\langle C\rangle$ averages ordinary clustering over all vertices, whereas $\overline C_{\mathrm{LOO}}$ averages leave-one-out clustering over neighbors. As detailed in Appendix~\ref{app:limit}, sampling vertices through incident edges is degree-biased and therefore does not in general reproduce the ordinary vertex-wise clustering mean.

Equivalently, under the joint assumptions $c_\infty=0$ and $\mu_\infty\approx\langle C\rangle>0$, the same calibration can be expressed as
\begin{equation}
\frac{\IC(v_n)}{\langle C\rangle}\longrightarrow 1,
\qquad
\IC(v_n)-\langle C\rangle\longrightarrow 0.
\end{equation}
For finite-degree vertices, the difference is positive or negative according to whether a score lies above or below the chosen background level. Neither expression supplies an additional theorem about the empirical network. Both are rescalings of the stated approximation.

The simultaneous narrowing of the point cloud has the same qualitative origin. The IC of a high-degree vertex contains an average over many branch-side terms. Under weak dependence and finite variance, the variability of such a mean decreases on the usual $d(v)^{-1/2}$ scale. The observed distribution is therefore consistent both with convergence of the mean level and with concentration around that level.

\subsection{Biological Interpretation of High-IC Proteins}
The biological interpretation of proteins with high IC values reflects established interface, scaffold, and adaptor functions. Table~\ref{tab:high_ic} decomposes four high-degree examples into the two factors of the IC, making the local signature directly visible.

\begin{table}[H]
\centering
\caption{Selected high-degree vertices within the top IC decile and decomposition of the IC into its two factors \cite{Wiebringhaus2009}.}
\label{tab:high_ic}
\begin{tabular}{lrrrrr}
\toprule
Protein & $d(v)$ & $C(v)$ & $1-C(v)$ & $\overline C_{\mathrm{LOO}}(v)$ & IC \\
\midrule
MED29 (IXL) & 25 & 0.247 & 0.753 & 0.421 & 0.317 \\
MED28 (EG1) & 24 & 0.250 & 0.750 & 0.456 & 0.342 \\
MED9 & 27 & 0.185 & 0.815 & 0.373 & 0.304 \\
ORC2L & 22 & 0.355 & 0.645 & 0.292 & 0.188 \\
\bottomrule
\end{tabular}
\end{table}

Three of the four proteins belong to the Mediator complex. MED29 (IXL) and MED28 (EG1) are located in the Tail region at the structural coupling to the conserved Mediator core. The high-resolution mammalian Mediator structure reveals an extended MED27--MED30 connection between the core and Lower Tail, together with direct Tail--Head contacts involving MED28 \cite{ZhaoMediator2021}. MED9 is a component of the Middle module, where the Med7C--Med21--Med4--Med9 tetramer forms a structural scaffold for recruitment of the remaining Middle-module subunits \cite{Robinson2015}. The three Mediator hits therefore constitute a coherent example of interface and scaffold positions within a complex whose biological function itself consists in mediating between regulatory activators and RNA polymerase II.

ORC2L, now designated ORC2, forms part of the structural core of the Origin Recognition Complex. ORC binds replication origins and, together with CDC6, provides the platform for recruitment of the MCM2--7 complex during pre-replication-complex assembly. Structural data show how defined interaction surfaces convert ORC--CDC6 into the active recruiter of MCM2--7 \cite{Feng2021}. ORC2L therefore likewise represents a scaffold and platform function within a modular initiation machinery.

Additional highly ranked IC proteins exhibit the same functional direction. SKIV2L2/\allowbreak MTR4 (MTREX) couples nuclear RNA adaptors to the RNA exosome. Its Arch domain serves as a binding platform for distinct adaptor proteins and links RNA recognition to exosomal processing \cite{Lingaraju2019}. RBBP1/\allowbreak ARID4A, recorded as RBP1 in the interaction annotation, recruits the histone deacetylase complex to retinoblastoma-\allowbreak family proteins within the mSIN3/\allowbreak HDAC system and has been experimentally characterized as an adaptor protein \cite{Lai2001}.

The PPI application therefore provides an empirical demonstration of the structural interpretation of the IC. Among the highly ranked vertices are proteins with established connector, interface, scaffold, and adaptor roles. The recovery of this pattern in an incompletely observed protein-interaction network is consistent with the practical interpretability of the semi-local score beyond the analytical graph families. It is not an independent validation based on a newly reconstructed network.

\section{Discussion}
\textbf{What the IC measures.} The IC is a bounded structural contrast between low cohesion among the focal neighbors and high cohesion immediately beyond those neighbors. Its construction admits a direct equality case. A score of one is attained only when the focal neighborhood is completely unclustered and every adjacent branch remainder is complete. This gives the numerical scale a sharper interpretation than a purely ordinal ranking score. The IC is dimensionless and bounded between zero and one. A value of $0.317$ therefore does not represent $31.7$ percent. It is a node weight on the defined structural scale. Semantically, the IC can be interpreted as a semi-local measure of connector vertices between cohesive regions. The high-degree result provides a second calibration. When focal clustering vanishes and mean branch-side cohesion stabilizes, the numerical IC level converges to that structural background level.

\textbf{What the IC does not measure.} High degree, epidemic spreading ability, global control of shortest paths, community participation, and articulation are not built into the score. Some of these roles may correlate with the IC in particular networks, but they do not define its objective. Different network questions can therefore induce legitimately different rankings.

\textbf{Semi-local screening versus global topology.} The absence of an articulation requirement is both an advantage and a limitation. It preserves the bounded information radius and reduces dependence on long alternative paths. A high IC value cannot establish that deleting the vertex disconnects the graph. An early complementary node-cut analysis found that iterative removal of an upper fraction of IC-ranked vertices was accompanied by a decrease in the mean local clustering coefficient \cite{Wiebringhaus2009}. This preliminary observation motivates systematic perturbation analysis of vertices prioritized by the IC.

\textbf{Relation to clustering-based influence measures.} Berahmand et al. use closely related constituent information but address a different objective \cite{Berahmand2018}. Their score was introduced to identify influential spreaders and was evaluated using diffusion models. The IC instead screens for connectors between cohesive regions. A quantitative comparison therefore requires an evaluation criterion that is independent of either construction. Diffusion performance would directly favor spreading-oriented measures. A criterion defined from the IC's own branch-cohesion structure would favor the IC. Future comparisons should therefore use external structural outcomes.

\textbf{Computational scope.} The IC uses radius-two topology and can be computed from triangle counts and common-neighbor intersections. It does not require all-pairs or source-to-all shortest-path information. This makes the measure computationally attractive for large graphs in settings where exact global path-based centralities become expensive. The point is not that the IC replaces betweenness centrality. The two measures represent different structural questions. The computational advantage follows from the bounded information radius of the IC.

\textbf{Applications beyond biology.} The same semi-local pattern may represent a broker between cohesive social or organizational groups, a researcher linking collaboration communities, a software component between internally cohesive packages, or a concept bridging established domains in knowledge and patent networks. These interpretations require undirected one-mode graphs in which triangles carry substantive information about local cohesion. The corollary for triangle-free graphs makes the boundary exact. On trees and simple bipartite graphs, $\IC\equiv0$. In networks with structurally very few triangles, the unmodified score is correspondingly weakly discriminative. Directed and weighted variants require separate definitions.

\section{Future Work}
A first extension is systematic targeted node-attack analysis across several real-world networks. Such experiments can compare removal by IC ranking with random removal and established local or global ranking measures, while evaluating changes in fragmentation, connectivity, and local cohesion. This would directly test whether vertices prioritized by the semi-local IC correspond to structurally vulnerable positions under perturbation.

A second direction is to characterize graph classes on which rankings induced by the IC and simpler comparison scores coincide or necessarily diverge. The hub-extended bridge already gives an exact separation from $1-C(v)$ and from direct degree weighting of low clustering. More general results could clarify the specific information contributed by branch-side leave-one-out cohesion. Ranking stability under edge errors and incomplete observation is a related problem.

Further theoretical extensions include a group-level IC for cohesive subgraphs, cliques, or communities, weighted and directed variants, and asymptotic results for random and hierarchical graph models. In particular, the expectation, variance, and concentration of the IC in controlled network ensembles are natural theoretical continuations of the high-degree stabilization observed here.

\section{Conclusion}
The Interconnectedness Coefficient (IC) is a bounded, semi-local, partition-free graph-theoretic node measure for connector vertices between cohesive network regions. Its definition separates low focal clustering from high leave-one-out cohesion beyond adjacent vertices. The range theorem establishes the scale and its exact maximum. Complete multi-branch constructions show that the maximum is invariant to the number and size of perfectly cohesive branches, while a one-edge perturbation gives an explicit gradual decrease from the ideal value. The hub-extended bridge family $H_{p,q,r}$ shows that a large cohesion-free hub does not displace the connector between cohesive regions. The same family also shows that direct degree weighting of low clustering can rank the cohesion-free hub above that connector. The windmill family $J_{c,h,r}$ identifies an exact fragmentation threshold at which the stronger semi-local connector role shifts from the external vertex to the gateway vertices of the branches.

The computation reduces to triangle counts and common-neighbor intersections and requires no global shortest-path information. The Human Interactome Map application adds a further property. At high degree, IC values narrow toward a level close to the network's mean clustering coefficient of $0.114$. The product representation explains this behavior. As focal clustering becomes negligible, the IC approaches the mean leave-one-out cohesion of adjacent branches. Under the network-level approximation that this branch-side mean is close to the characteristic clustering level, the observed plateau near $0.114$ follows directly.

The PPI application complements the theory with a real-world demonstration. Highly ranked vertices include proteins with established interface, scaffold, and adaptor roles, notably the Mediator hits MED29, MED28, and MED9, together with ORC2, MTR4, and RBBP1/\allowbreak ARID4A. The IC is therefore structurally interpretable beyond synthetic graph families as a semi-local connector measure for cohesive network regions.

\appendix
\section{Technical Note on the High-Degree Limit}\label{app:limit}
The empirical approximation $\IC_{\text{high degree}}\approx0.114$ must be distinguished from a universal identity. The global ordinary mean clustering coefficient is
\begin{equation}
\langle C\rangle=\frac{1}{|V|}\sum_{v\in V}C(v),
\end{equation}
whereas the IC contains a neighborhood-conditioned leave-one-out mean. Even in an uncorrelated network, sampling a vertex through an incident edge approximately weights degree classes according to
\begin{equation}
Q(k)=\frac{kP(k)}{\langle k\rangle},
\end{equation}
rather than according to the vertex distribution $P(k)$. In addition, $C_{u\setminus v}$ differs slightly from ordinary $C(u)$ because the focal connection is excluded. Consequently, $\mu_\infty=\langle C\rangle$ requires an approximation concerning the composition of high-degree neighborhoods. This equality does not follow from the definition of the IC alone.

The concentration visible in Figure~\ref{fig:him_degree_ic} also admits a simple sampling-theoretic interpretation. If the branch-side terms $X_i=C_{u_i\setminus v}$ have stable variance $\sigma^2$ and are independent or only weakly dependent, then approximately
\begin{equation}
\operatorname{Var}\!\left(\frac{1}{d(v)}\sum_{i=1}^{d(v)}X_i\right)\approx\frac{\sigma^2}{d(v)},
\end{equation}
so that the standard deviation decreases approximately as $d(v)^{-1/2}$. It describes the mechanism by which averaging over an increasing degree can produce the observed narrowing.

\vspace{4pt}
\noindent{\scriptsize Generative language models assisted with manuscript preparation. All scientific responsibility remains with the author.}

\end{document}